\documentclass[letterpaper, 10 pt, conference]{ieeeconf}  

\IEEEoverridecommandlockouts                              

\usepackage{graphicx} 
\usepackage{epsfig} 
\usepackage{amsmath} 
\usepackage{amssymb}  
 
\usepackage{amsthm}
\usepackage{todonotes}

\usepackage{xcolor}
\usepackage{tikz}
\usetikzlibrary{positioning,calc}
\usetikzlibrary{arrows.meta}
\usetikzlibrary{shapes}
\usepackage{psfrag}

\usepackage{url}
\usepackage{float}
\usepackage{multirow}
\usepackage{wrapfig}
\usepackage{subcaption}
\usepackage{siunitx}
\usepackage[hidelinks]{hyperref}

\DeclareMathOperator{\sign}{sign}
\DeclareMathOperator{\Imag}{Im}
\DeclareMathOperator{\Real}{Re}
\DeclareMathOperator{\imag}{i}

\newtheorem{theorem}{Theorem}

\newtheorem{remark}{Remark}

\usepackage{tikz}
\usetikzlibrary{shapes,arrows,positioning}
\usepackage{verbatim}

\title{\LARGE \bf
Chattering Reduction for a Second-Order Actuator via Dynamic Sliding Manifolds
}

\author{Patricia N\"other, Lars Watermann and Johann Reger$^\star$
\thanks{All authors are with the Control Engineering Group, Technische Universit\"at Ilmenau, P.O. Box 100565, D-98684, Ilmenau, Germany\newline
        $^\star$Corresponding author: {\tt\small johann.reger@tu-ilmenau.de}}%
}

\begin{document}

\maketitle
\thispagestyle{empty}
\pagestyle{empty}

\begin{abstract}
We analyze actuator chattering in a scalar integrator system subject to second-order actuator dynamics with an unknown time constant and first-order sliding-mode control, using both a conventional static sliding manifold and a dynamic sliding manifold. Using the harmonic balance method, we prove that it is possible to adjust the parameters of the dynamic sliding manifold for the specified system class so as to reduce the amplitude of the chattering in comparison to the static manifold. We illustrate our results with a simulation example. This contribution serves as a proof of concept to motivate further investigations in chattering reduction via dynamic sliding manifolds.

\end{abstract}

\section{Introduction}
Sliding mode control (SMC) is a well-known approach for handling systems with bounded disturbances and uncertainties~\cite{shtesselSlidingModeControl2014}. However, a significant disadvantage is the occurrence of chattering, which is a finite frequency, undesirable periodic oscillation along the sliding manifold~\cite{utkinChatteringProblemSliding2006, levantChatteringAnalysis2010}. 
Chattering is known to occur due to the discontinuity in the control signal. Common methods against this kind of chattering are the approximation of the discontinuity~\cite{burtonContinuousApproximationVariable1986} and higher-order sliding modes by shifting the discontinuity into higher order derivatives of the control~\cite{bartoliniChatteringAvoidanceSecondorder1998}. 
However, in addition, chattering also occurs with continuous signals due to unmodeled dynamics in the system~\cite{boikoAnalysisChatteringContinuous2005, boikoAnalysisChatteringSystems2007}. Especially, chattering is known to occur in first-order SMC with actuators of relative degree two or higher~\cite{boikoAnalysisChatteringContinuous2005}.
To suppress chattering caused by unmodeled dynamics, applying state observers~\cite{leeChatteringSuppressionMethods2007, hsuNyquistCriterionChattering2024} or filter characteristics~\cite{kruppChatteringfreeSlidingMode1999} are discussed. In~\cite{castilloDescribingfunctionbasedAnalysisTune2020} chattering due to discontinuity is reduced by the approximation with a saturation function and the parameters of the control are tuned using the describing function method.
Another approach for adjusting or improving the system behavior is using a dynamic sliding manifold (DSM) instead of a static sliding manifold (SSM). This method is adopted from frequency shaping sliding mode control~\cite{youngFrequencyShapingCompensator1993}.
For example in~\cite{shtesselSlidingModeControl2003}, a DSM is applied to a boost and buck-boost converter to stabilize the internal dynamics.
Also in~\cite{koshkoueiDynamicSlidingMode2005}, the extra dynamics of the DSM are used as a compensator for the desired system behavior.
In our previous study~\cite{tietzeLocalStabilityAnalysis2025}, we analyze stability with both globally bounded and unbounded perturbations using a DSM.
The novel idea now is to use the DSM for reducing chattering compared to a SSM with similar performance properties, but without changing the switching law, i.e.~without smoothening the sign-function, reducing the gain or increasing the order of the sliding mode.
The analysis of the chattering is based on the so-called harmonic balance (HB) method or describing function (DF) technique, rendering possible to calculate the amplitude and frequency of the fundamental oscillation~\cite{slotineAppliedNonlinearControl1991}.
The DF analysis in~\cite{shtesselNewApproachChattering1996} shows that there is no chattering with a first-order stable lag in the unmodeled dynamics. The analysis is performed for ideal and real first-order SMC with a SSM.

In this study, we analyze the capability of a DSM to reduce the chattering amplitude. As a proof of concept we consider a scalar integrator plant with second order actuator dynamics and unknown time constant. We show that there is always a DSM that reduces the chattering amplitude for this specific system. A simulation study underlines the result and demonstrates the effectiveness of the approach.

The manuscript is structured as follows: In~\autoref{sec:problem_statement} we introduce the problem statement, and give the system class and the actuator structure. In~\autoref{sec:SMC} we analyze the chattering of the SSM and calculate its corresponding amplitude. The main results are presented in~\autoref{sec:DSM} with the chattering analysis for the DSM and the proof of reducing the chattering amplitude for appropriate parameter selection in the DSM. These results are illustrated with a simulation example and various actuator values in~\autoref{sec:example}. Finally, we conclude with our results in~\autoref{sec:conclusion}.

\section{Problem Statement}\label{sec:problem_statement}
Consider a scalar system
\begin{align}
\dot x (t) &= u(t) \label{eq:system}
\end{align}
with time $t$, state $x(t)\in \mathbb{R}$, control input $u(t)\in \mathbb{R}$ and initial value $x(t_0) = x_0$ at $t_0 \geq 0$. We control this plant with a first-order SMC referring to the sliding manifold (SM) $\Xi = \left\lbrace x \in \mathbb{R}\, |\, \sigma (x) = 0 \right\rbrace$. The argument of the sliding variable $\sigma$ is omitted for the sake of readability. For the control design we choose the reaching law
\begin{align}
	\dot \sigma  &= -K \sign{\left(\sigma\right)} \label{eq:reaching_law}
\end{align}
with control gain $K >0$ to achieve finite-time convergence to the SM~\cite{shtesselSlidingModeControl2014}. Additionally, assume a second order actuator
\begin{align}
	\begin{pmatrix}
		\dot \xi_{1} (t)\\
		\dot \xi_{2} (t)
	\end{pmatrix} &= \begin{pmatrix}
	0 & 1 \\
	-\frac{1}{\tau^2} & -\frac{2}{\tau}
	\end{pmatrix} \begin{pmatrix}
	\xi_{1} (t)\\
	\xi_{2} (t)
	\end{pmatrix} + \begin{pmatrix}
	0\\
	\frac{1}{\tau^2}
	\end{pmatrix} u_\mathrm{a}(t) \label{eq:actuator_SOL}
\end{align}
with actuator state $\xi(t) = (\xi_1(t),\xi_2(t))^\top \in \mathbb{R}^2$, actuator input $u_\mathrm{a}(t)\in \mathbb{R}$ and initial value $\xi(t_0) = \xi_0$. The actuator output $u(t) = \xi_1(t)$ shall be the input to the plant \eqref{eq:system}. The actual control input is the input of the actuator. We assume the structure \eqref{eq:actuator_SOL} of the actuator to be known, but the time constant $\tau$ to be unknown. Furthermore, we assume that the initial value of the actuator state is not known and that the actuator state is not measurable, hence not available for control. The relative degree of the sliding variable $\sigma$ with respect to the input $u_\mathrm{a}$ is three. Thus, chattering is expected in the closed loop system~\cite{boikoAnalysisChatteringContinuous2005}.
The aim is to reduce the amplitude of the chattering by designing a new DSM with the same relative degree. All solutions of systems with discontinuities are understood in the sense of Filippov~\cite{filippovDifferentialEquationsDiscontinuous1988}.

\section{Chattering Analysis for a Static Sliding Manifold}\label{sec:SMC}
We first design the control law for the ideal case without the actuator. A static sliding manifold (SSM) {$\Xi_{\mathrm{S}}~=~\left\lbrace~x\in\mathbb{R}\,|\, \sigma_{\mathrm{S}}(x)=0\right\rbrace$} with
\begin{align}
	\sigma_{\mathrm{S}} = x(t) \label{eq:SM_static}
\end{align}
and the reaching law \eqref{eq:reaching_law} yields the control law
\begin{align}
	u(t) &= - K \sign (\sigma_\mathrm{S})\label{eq:SMC_control_law}
\end{align}
as input to the plant \eqref{eq:system}. With the actuator, the discontinuous control law \eqref{eq:SMC_control_law} actually acts as the input $u_\mathrm{a}$ to \eqref{eq:actuator_SOL}. To analyze the chattering, we use the HB method~\cite{slotineAppliedNonlinearControl1991}. We follow the steps with unmodeled dynamics from~\cite{shtesselNewApproachChattering1996}.
To this end, we extend the state of the plant \eqref{eq:system} by the actuator states from \eqref{eq:actuator_SOL} and insert the control law \eqref{eq:SMC_control_law}. Replacing the nonlinearity $\sign(\sigma)$ by an auxiliary input $u_\mathrm{HB}$ we get the linear, partially closed loop system
\begin{align}
\Sigma_\mathrm{S}: \left\lbrace 
\begin{aligned}
	\begin{pmatrix}
		\dot x\\
		\dot \xi_1\\
		\dot \xi_2
	\end{pmatrix} &= \begin{pmatrix}
		0 & 1& 0\\
		0 & 0 & 1\\
		0  & -\frac{1}{\tau^2} &-\frac{2}{\tau}
	\end{pmatrix} \begin{pmatrix}
		x\\
		\xi_{1}\\
		\xi_{2}
	\end{pmatrix} + \begin{pmatrix}
		0\\
		0\\
		-\frac{K}{\tau^2}
	\end{pmatrix} u_{\mathrm{HB}} \label{eq:SMC_SOL_partially_closed_loop} \\
    \sigma_\mathrm{S} &=  \begin{pmatrix}
		1 & 0& 0
	\end{pmatrix} \begin{pmatrix}
		x\\
		\xi_{1}\\
		\xi_{2}
	\end{pmatrix}
    \end{aligned}
    \right.
\end{align}
as depicted in~\autoref{fig:partially_closed_loop}.
Note that, in this case, the partially closed loop is similar to the open loop (except for the factor $-K$) because the equivalent control vanishes for the integrator~\eqref{eq:system}.
The corresponding transfer function from auxiliary input $u_\mathrm{HB}$ to output $\sigma_\mathrm{S}$ is given by
\begin{align}
	G_{\mathrm{S}}(s) &= -\frac{K}{s {\left(s^2 \tau^2 +2 s \tau +1\right)}}.\label{eq:SMC_SOL_transfer_function}
\end{align}
Hence, applying the discontinuous input $\sign\left(\sigma \right)$ to the partially closed loop \eqref{eq:SMC_SOL_partially_closed_loop} leads to periodic oscillations in the system. Their fundamental oscillation is
\begin{align}
    \sigma_1(t) &= \hat{\sigma} \sin\left(\omega_\mathrm{p} t\right)
\end{align}
with the amplitude $\hat{\sigma}>0$ and frequency $\omega_\mathrm{p}$~\cite{slotineAppliedNonlinearControl1991}. Assume that there is no phase shift in the oscillation of the sliding variable and that
\begin{align}
    -\sigma &= \hat{\sigma} \sin\left(\omega_\mathrm{p} t\right)
\end{align}
is valid~\cite{shtesselSlidingModeControl2014}. To determine the frequency~$\omega_\mathrm{p}$ and the amplitude $\hat{\sigma}$ of the fundamental oscillation we intersect the negative inverse of the DF~\cite{shtesselSlidingModeControl2014} of the nonlinearity $\sign(\sigma)$, that is
\begin{align}
	N(\hat{\sigma}) &=  \dfrac{4}{\pi \hat{\sigma}} \label{eq:DF_sign}
\end{align}
with \eqref{eq:SMC_SOL_transfer_function} such that
\begin{align}
	-\frac{1}{N(\hat{\sigma})}& =G_{\mathrm{S}}(\imag \omega_{\mathrm{p}}).\label{eq:HB_intersection}
\end{align}
First, we set the imaginary part of \eqref{eq:SMC_SOL_transfer_function} to zero
\begin{align}
	\Imag \lbrace G_{\mathrm{S}}(\imag \omega_{\mathrm{p,S}}) \rbrace &=  \frac{-K \left(\omega_{\mathrm{p,S}}^2 \tau^2 -1\right)}{\omega_{\mathrm{p,S}} \left(4\omega_{\mathrm{p,S}}^2 \tau^2 +{\left(\omega_{\mathrm{p,S}}^2 \tau^2 -1\right)}^2 \right)} = 0
\end{align}
 and receive the frequency
\begin{align}
    \omega_{\mathrm{p,S}} &=\frac{1}{\tau} .\label{eq:SMC_periodic_frequency}
\end{align} Inserting the frequency  \eqref{eq:SMC_periodic_frequency} into the real part of the transfer function \eqref{eq:SMC_SOL_transfer_function}, that is
\begin{align}
	\Real \lbrace G_{\mathrm{S}}(\imag \omega) \rbrace &= \frac{2 K \tau }{4\omega^2 \tau^2 +{{\left(\omega^2 \tau^2 -1\right)}}^2 },
\end{align}
yields
\begin{align}
	\Real \lbrace G_{\mathrm{S}}(\imag \omega_{\mathrm{p,S}}) \rbrace &= \frac{K \tau}{2}.\label{eq:SMC_transfer_function_real_wp}
\end{align}
Then calculating the intersection of the HB \eqref{eq:HB_intersection} with  \eqref{eq:SMC_transfer_function_real_wp} and the DF \eqref{eq:DF_sign} leads to 
\begin{align}
    \hat \sigma_{\mathrm{S}} &= \left| - \frac{2 K \tau} {\pi}\right|\label{eq:SMC_Amp_sigma}
\end{align}
as chattering amplitude of the sliding variable.

\begin{figure}
    \centering
    \tikzstyle{block} = [draw, rectangle, 
    minimum height=3em, minimum width=6em]
\tikzstyle{sum} = [draw, circle, node distance=1cm]
\tikzstyle{input} = [coordinate]
\tikzstyle{output} = [coordinate]
\tikzstyle{pinstyle} = [pin edge={to-,thin,black}]
\tikzset{every arrow/.style={->, >=latex}}

\begin{tikzpicture}[auto, node distance=2cm,>=latex']
	\node [sum, name=input] {};
	\node [block, right of=input, node distance = 2cm,inner sep=5pt] (nonlinearity) {\def\svgwidth{0.2\linewidth}
    \input{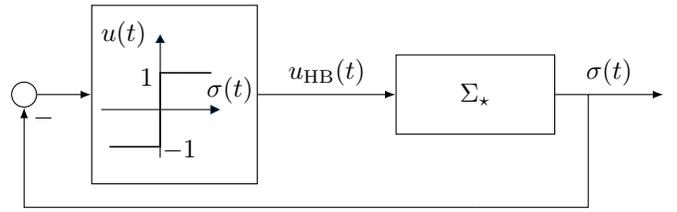}};
	
    \node [block, right of=nonlinearity,
	node distance=4cm, inner sep = 5pt] (linear_part) {$\Sigma_{\star}$};
	\node [output, right of=linear_part, node distance=2.5cm] (output) {};
    \node [output, right of=linear_part, node distance = 1.5cm] (y) {};
	\node[input, name=helpff2, below of = linear_part, node distance = 1.5cm] {};
	
	\draw [every arrow] (input) -- (nonlinearity);
	\draw [every arrow] (nonlinearity) -- node {$u_{\mathrm{HB}}(t)$} (linear_part);
    \draw [every arrow] (linear_part) -- node {$\sigma(t)$} (output);
	\draw [-] (y) |- (helpff2);
	\draw [every arrow] (helpff2) -| node[pos=0.95, right]{$-$} (input);
\end{tikzpicture}
    \vspace{-5ex}
    \caption{Block diagram of the sign-function and the partially closed loop $\Sigma_{\star}$ with $\star \in \left\lbrace \mathrm{S}, \mathrm{D} \right\rbrace$.}
    \label{fig:partially_closed_loop}
\end{figure}

\section{Dynamic Sliding Manifold}\label{sec:DSM}
We may adhere to the same procedure for analyzing the chattering calculation as before in \autoref{sec:SMC}. Let the DSM $\Xi_{\mathrm{D}} = \left\lbrace x \in  \mathbb{R},\,z \in \mathbb{R}\, |\, \sigma_{\mathrm{D}} (x,z) = 0 \right\rbrace$ be given by
\begin{subequations}
	\begin{align}
		\dot z(t) &= f z(t) + g x(t)\\
		\sigma_{\mathrm{D}} &= h z(t) + l x(t) \label{eq:SM_dynamic}
	\end{align} \label{eq:DSM}%
\end{subequations}
from~\cite{youngFrequencyShapingCompensator1993} with actuator state $z(t)\in\mathbb{R}$ and design parameters $f,g,h,l \in \mathbb{R}$. We assume $l \neq 0$ so as to keep the relative degree one from $x$ to $\sigma$ as in \autoref{sec:SMC}. With the reaching law \eqref{eq:reaching_law} we obtain the control law
\begin{align}
	u(t) &= -l^{-1}\left(K \sign(\sigma_\mathrm{D}) + h f z(t) + h g x(t)\right)\label{eq:DSM_control_law}
\end{align} 
including the equivalent control.

\subsection{Stability Constraints}
The DSM \eqref{eq:DSM} itself must be asymptotically stable, so we need $f < 0$.
Since the plant \eqref{eq:system} does not contain perturbations,  \eqref{eq:DSM_control_law} with $K >0$ guarantees the existence of the sliding mode~\cite{shtesselSlidingModeControl2014, youngFrequencyShapingCompensator1993}. In sliding phase $0 \equiv \sigma$, the reduced system with $x(t) = - l^{-1} h  z(t)$ reads
\begin{align}
	\dot z (t) &= (f-gl^{-1}h) z(t)
\end{align}
and should be asymptotically stable without the influence of the actuator, hence
\begin{align}
    f-gl^{-1}h <0 \label{eq:DSM_stability_reduced}
\end{align} 
is required for the design of the DSM.

\subsection{Chattering Analysis}
For the analysis with the DSM, we calculate the partially closed loop equivalently to \autoref{sec:SMC}. Therefore, we extend the state $x$ of \eqref{eq:system} by the additional state $z$ of the sliding manifold \eqref{eq:DSM} and the actuator state $\xi $ from \eqref{eq:actuator_SOL}. Inserting the control law \eqref{eq:DSM_control_law} and again replacing $\sign (\sigma)$ by the auxiliary input $u_{\mathrm{HB}}$ we arrive at the linear, partially closed loop
\begin{align}
\Sigma_\mathrm{D}: \left\lbrace 
\begin{aligned}
	\begin{pmatrix}
		\dot z\\
		\dot x\\
		\dot \xi_1\\
		\dot \xi_2
	\end{pmatrix} &= \begin{pmatrix}
		f & g & 0 & 0\\
		0 & 0 & 1& 0\\
		0 & 0 & 0 & 1\\
		-\frac{h f}{l\tau^2} & -\frac{h g}{l\tau^2}  & -\frac{1}{\tau^2} & -\frac{2}{\tau}
	\end{pmatrix} \begin{pmatrix}
		z\\
		x\\
		\xi_{1}\\
		\xi_{2}
	\end{pmatrix}\\
    &\quad+ \begin{pmatrix}
		0\\
		0\\
		0\\
		-\frac{K}{l\tau^2}
	\end{pmatrix} u_{\mathrm{HB}}\\
    \sigma_{\mathrm{D}} &= \begin{pmatrix}
		h & l & 0 & 0\\
	\end{pmatrix} \begin{pmatrix}
		z\\
		x\\
		\xi_{1}\\
		\xi_{2}
	\end{pmatrix}.\label{eq:DSM_SOL_partially_closed_loop}
    \end{aligned}\right.
\end{align} 
Then the transfer function from input $u_\mathrm{HB}$ to output $\sigma_\mathrm{D}$ is
\begin{align}
        G_{\mathrm{D},\sigma}(s)= \frac{-K \left(g h-f l+ls\right)}{s \left(l \tau^2  s^3 +a_2  s^2  +a_1 s + g h - f l\right)}\label{eq:DSM_SOL_transfer function}
\end{align}
with parameters $a_2 = (2l\tau -f l \tau^2)$, $a_1 = l -2 f l\tau$. The parameters $f,g,h,l$  have to be selected such that $l \tau^2  s^3 +a_2  s^2 \tau +a_1 s + g h - f l$ is a Hurwitz polynomial. This is especially fulfilled if $f<0$, $\sign(h) = -\sign(l)$ and \eqref{eq:DSM_stability_reduced} holds.
\begin{theorem}\label{lemma:mag_sigma}
	For system \eqref{eq:system} subject to actuator dynamics \eqref{eq:actuator_SOL} with $\tau$ unknown,
    a dynamic sliding manifold \eqref{eq:DSM} can be chosen such that the  chattering amplitude of the sliding variable $\sigma$ is smaller than for the static sliding manifold~\eqref{eq:SM_static}.
\end{theorem}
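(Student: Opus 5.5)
The plan is to run the harmonic balance procedure of \autoref{sec:SMC} on the transfer function \eqref{eq:DSM_SOL_transfer function} and show that a suitable choice of the design parameters gives an intersection point $-1/N(\hat\sigma_{\mathrm D})$ closer to the origin than $-\tfrac{K\tau}{2}$, i.e.\ $\hat\sigma_{\mathrm D}<\hat\sigma_{\mathrm S}$, for every $\tau>0$. For a fair comparison I fix $l=1$, so that the discontinuous part of \eqref{eq:DSM_control_law} has the same gain $K$ as \eqref{eq:SMC_control_law}.

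The first step is a factorisation. With $l=1$ the coefficients satisfy
\begin{align}
\tau^2s^3+a_2s^2+a_1s-f=(\tau s+1)^2(s-f).
\end{align}
Writing $c=gh$, the transfer function becomes
\begin{align}
G_{\mathrm D,\sigma}(s)=-\frac{K}{s}\,\frac{s-f+c}{(\tau s+1)^2(s-f)+c}.
\end{align}
For $c=0$ this reduces exactly to $G_{\mathrm S}$ from \eqref{eq:SMC_SOL_transfer_function}. This suggests treating the DSM as a perturbation of the SSM in the single parameter $c$, with $f<0$ fixed.

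The harmonic balance conditions are $\Imag G_{\mathrm D,\sigma}(\imag\omega)=0$ and $\hat\sigma_{\mathrm D}=\tfrac{4}{\pi}\,\Real G_{\mathrm D,\sigma}(\imag\omega)$. At $c=0$ they have the solution $\omega_{\mathrm{p,S}}=1/\tau$, given by \eqref{eq:SMC_periodic_frequency}, with $\Real G_{\mathrm S}=K\tau/2$ and a transversal crossing, since $\partial_\omega\Imag G_{\mathrm S}\neq 0$ there. The implicit function theorem therefore gives a smooth branch $\omega_{\mathrm p}(c)$. The first-order change of the amplitude is
\begin{align}
\frac{d\hat\sigma_{\mathrm D}}{dc}\Big|_{c=0}=\frac{4}{\pi}\Big(\Real\,\partial_cG-\frac{\partial_\omega\Real G_{\mathrm S}}{\partial_\omega\Imag G_{\mathrm S}}\,\Imag\,\partial_cG\Big)\Big|_{\omega=1/\tau}.
\end{align}
Here $\partial_cG=-\frac{K}{s}\,\frac{(\tau s+1)^2-1}{(\tau s+1)^4(s-f)}$. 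At $s=\imag/\tau$ we have $(\tau s+1)^2=2\imag$, so this is an explicit complex number depending on $\tau$ and $f$. The remaining quantities follow from the expressions for $\Real G_{\mathrm S}$ and $\Imag G_{\mathrm S}$ already computed in \autoref{sec:SMC}.

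The main obstacle is showing that this derivative is nonzero and has a sign that does not depend on the unknown $\tau$. To handle the dependence on $f$, I would scale everything with $\theta=-f\tau>0$. After scaling, the derivative should equal $K\tau^2$ times a rational function $r(\theta)$ that depends on $\tau$ only through the prefactor. I would then check, by explicit computation and numerically if necessary, that $r(\theta)$ has a fixed sign, at least for $\theta$ large, i.e.\ $|f|$ large.

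If the sign is fixed, choosing $\sign(c)$ opposite to it makes $\hat\sigma_{\mathrm D}<\hat\sigma_{\mathrm S}$ for all sufficiently small $|c|$. Since $c=gh$, this sign is realised by the sign of $g$, with $\sign(h)=-\sign(l)$ retained. The stability requirements are then checked: $f<0$ holds by choice, and \eqref{eq:DSM_stability_reduced} reads $f-c<0$, which holds for small $|c|$. The Hurwitz property of $(\tau s+1)^2(s-f)+c$ holds for small $|c|$ by continuity of the roots.

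If it turns out that the admissible size of $|c|$ depends on $\tau$, I would instead evaluate $\hat\sigma_{\mathrm D}$ in closed form along the scaled variables $(\theta,c\tau)$. The aim would then be a fixed choice such as $c=\kappa f$, with $\kappa$ in a range for which the amplitude inequality holds independently of $\tau$.
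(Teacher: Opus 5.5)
Your route is genuinely different from the paper's and it can be completed, but as written it stops at its only nontrivial step. The paper works at the opposite end of the admissible interval $c=gh\in(f,0)$. With $h=-1$, $l=1$ it lets $g\to-f$, i.e.\ $c\to f$, where the zero of $G_{\mathrm D,\sigma}$ cancels the integrator pole and the harmonic-balance equations become explicit ($\tilde\omega_{\mathrm{p,D}}=\sqrt{1-2\tilde f}$, amplitude ratio $2/(2\tilde f^2-5\tilde f+2)$). You instead perturb from the static end $c=0$. Your factorisation, the reduction to $G_{\mathrm S}$, the transversality and the derivative formula are all correct. However, the sign of $r(\theta)$ is the whole content of your argument, and you leave it to ``explicit computation and numerically if necessary, at least for $\theta$ large''. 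A numerical check cannot settle this for all $\theta>0$. The computation is short: at $\omega=1/\tau$ one has $\partial_\omega\Real G_{\mathrm S}=-K\tau^2$, $\partial_\omega\Imag G_{\mathrm S}=-K\tau^2/2$ and $\partial_cG=K\tau^2\,\frac{(2\theta+1)+\imag(\theta-2)}{4(1+\theta^2)}$, hence
\[
\frac{d\hat\sigma_{\mathrm D}}{dc}\Big|_{c=0}=\frac{5K\tau^2}{\pi(1+\theta^2)}>0
\]
for every $\theta$, not only for large ones. Thus you should take $c<0$, i.e.\ $g>0$ when $h=-1$, $l=1$, which is the same sign as in the paper.

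The second gap is the one you flag yourself. The implicit function theorem only gives ``$|c|$ small enough'', with a threshold depending on the unknown $\tau$. By itself it therefore yields a manifold for each $\tau$, not one chosen without knowing $\tau$; the paper's step ``choose $\tilde g$ arbitrarily close to $-\tilde f$'' is equally unquantified. Your fallback closes this cleanly. Set $\tilde s=\tau s$, $m:=\tau(c-f)$ and $W:=\tau^2\omega_{\mathrm p}^2$. Then $G_{\mathrm D,\sigma}=-K\tau(\tilde s+m)/\bigl(\tilde s(\tilde s^3+(2+\theta)\tilde s^2+(1+2\theta)\tilde s+m)\bigr)$. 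The condition $\Imag G_{\mathrm D,\sigma}=0$ reduces to $W^2-(1+2\theta-m(2+\theta))W-m^2=0$, which has exactly one positive root, and there $\Real G_{\mathrm D,\sigma}=K\tau/((2+\theta)W-m)$. Evaluating the quadratic at $W=(2+m)/(2+\theta)$ shows $\hat\sigma_{\mathrm D}<\hat\sigma_{\mathrm S}\iff(m-\theta)(m+4\theta+10)<0$. Hence every $c\in(f,0)$ reduces the amplitude for all $\tau>0$; equivalently $c=\kappa f$ with $0<\kappa<1$, or $g\in(0,-f)$ for $h=-1$. Here $m=\theta$ is the SSM, and $m\to0$ reproduces the paper's limit $4K\tau/(\pi(2\tilde f^2-5\tilde f+2))$. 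In short, the paper's limit gives an explicit reduction that becomes large for large $|f|\tau$, but only in an unattainable limit. Your perturbation alone certifies only an infinitesimal improvement. Completed by this closed form, however, it gives a statement uniform in $\tau$ on the whole admissible interval, which also settles the paper's second remark.
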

\begin{proof}
	We choose scaled variables $\tilde{s}:= s \tau$, $\tilde{f} := f \tau$ and $\tilde{g} := g \tau$ according to~\cite{filippovDifferentialEquationsDiscontinuous1988}.
    The scaling of the parameter does not influence the stability constraints
    \begin{align}
        \tilde{f}-\tilde{g}l^{-1}h =\left(f -  g h l^{-1}\right)\tau <0 \Leftrightarrow  f-gl^{-1}h <0.
     \end{align}
     Additionally, we set $h=-1$ and $l=1$. With these substitutions, the transfer function~\eqref{eq:DSM_SOL_transfer function} simplifies to 
    \begin{align}
	G_{\mathrm{D},\sigma}(\tilde{s}) &= \frac{-K \tau \left(\tilde{f}+\tilde{g}-\tilde{s}\right)}{\tilde{s} \left(-\tilde{s}^3 +(\tilde{f}-2) \tilde{s}^2 +(2\tilde{f}-1) \tilde{s} + \tilde{f}+\tilde{g}   \right)}.\label{eq:DSM_transfer_function_scaled_subs}
	\end{align}
    Then we analyze the limit $\tilde{g} \rightarrow -\tilde{f}$. Condition $|\tilde{g}| < |\tilde{f}|$ must hold to satisfy the stability constraint~\eqref{eq:DSM_stability_reduced}. In the limit, the transfer function~\eqref{eq:DSM_transfer_function_scaled_subs} is
	\begin{align}
		\lim\limits_{\tilde{g} \rightarrow -\tilde{f}} G_{\mathrm{D},\sigma}(\tilde{s}) &= -\frac{K \tau }{\tilde{s} {\left(+\tilde{s}^2 + (2-\tilde{f}) \tilde{s}-2 \tilde{f} +1\right)}} \label{eq:DSM_SOL_transfer_function_lim}.
	\end{align}
    Note that this limit cannot be reached. Yet, $\tilde{g}$ can be chosen arbitrarily close to $-\tilde{f}$, hence, it is possible to get arbitrarily close to the limit. Calculating the intersection of the imaginary part of~\eqref{eq:DSM_SOL_transfer_function_lim} with the real axis yields
	\begin{align}
	0 &= \lim\limits_{\tilde{g} \rightarrow -\tilde{f}} \Imag \lbrace G_{\mathrm{D},\sigma}(\imag \tilde{\omega}_{\mathrm{p}}) \rbrace \nonumber\\
    &= -\frac{K \tau \left(\tilde{\omega}_{\mathrm{p}}^2 +2 \tilde{f}-1\right)}{\tilde{\omega}_{\mathrm{p}} \left(\left(\tilde{\omega}_{\mathrm{p}}^2 +2 \tilde{f}-1\right)^2 +\left(2 \tilde{\omega}_{\mathrm{p}} -\tilde{f} \tilde{\omega}_{\mathrm{p}} \right)^2 \right)}\\
	&\Rightarrow \lim\limits_{\tilde{g} \rightarrow -\tilde{f}}  \tilde{\omega}_{\mathrm{p,D}} = \sqrt{1-2 \tilde{f}}.\label{eq:DSM_periodic_frequency}
	\end{align}
 Inserting~\eqref{eq:DSM_periodic_frequency} into the real part of the limit~\eqref{eq:DSM_SOL_transfer_function_lim} gives
 \begin{align}
 	&\lim\limits_{\tilde{g} \rightarrow -\tilde{f}} \Real \lbrace G_{\mathrm{D,\sigma}}\left(\imag \tilde{\omega}_{\mathrm{p,D}} \right) \rbrace\nonumber\\
    &= \frac{K \tau \left(2 -\tilde{f} \right)}{ {{\left(\tilde{\omega}_{\mathrm{p,D}}^2 +2 \tilde{f}-1\right)}}^2 +{{\left(2 \tilde{\omega}_{\mathrm{p,D}} -\tilde{f} \tilde{\omega}_{\mathrm{p,D}} \right)}}^2 }\\
 	&= \frac{K \tau }{2 \tilde{f}^2 -5 \tilde{f}+2}
 \end{align}
and equating with the negative inverse of the DF~\eqref{eq:DF_sign} yields the chattering amplitude  
\begin{align}
	\lim\limits_{\tilde{g} \rightarrow -\tilde{f}} \hat{\sigma}_{\mathrm{D}} = \left| -\frac{4 K \tau }{\pi {\left(2 \tilde{f}^2 -5 \tilde{f}+2\right)}}\right|\label{eq:DSM_magnitude_sigma}
\end{align}
of the sliding variable. Eventually, observe that
\begin{align}
    &&\lim\limits_{\tilde{g} \rightarrow -\tilde{f}} \hat{\sigma}_{\mathrm{D}} &< \hat{\sigma}_{\mathrm{S}}\\
	&\Leftrightarrow&\frac{4 K \tau }{\pi {\left(2 \tilde{f}^2 -5 \tilde{f}+2\right)}} &< \frac{2 K \tau}{\pi}\\
	&\Leftrightarrow& 2 \tilde{f}^2 -5 \tilde{f} &> 0,
\end{align}
which is fulfilled for all $\tilde{f} < 0$, mimicking also one of the stability constraints. 
\end{proof}

\begin{remark}
    The parameters $\tilde{f}$ and $\tilde{g}$ are unknown in general, but are scaled in the same way. Hence, the limit $\tilde{g} \rightarrow -\tilde{f}$ is equivalent to the limit $g \rightarrow -f$ with $f<0$. With the selection $l = 1$ and $h =-1$ all parameters can be chosen directly. To overcome the inaccuracies of the HB method, $-f$ should be selected sufficiently large.
\end{remark}
\begin{remark}
	The selection of the parameters in the proof of \autoref{lemma:mag_sigma} is only one possibility. To avoid the limit, it is also feasible to set $g =-\alpha \tilde{f}, 0<\alpha<1$ and choose $\tilde{f}$.
    This approach leads to a more sophisticated analysis including a graphical comparison of the amplitudes.
\end{remark}
Since the sliding variables $\sigma_\mathrm{S}$ and $\sigma_\mathrm{D}$ correspond to distinct manifolds, the reduction of the chattering amplitude of the sliding variable does not guarantee the reduction of the chattering amplitude of the state in general, i.e., $\hat{\sigma}_\mathrm{D} <\hat{\sigma}_\mathrm{S}$ does not ensure $\hat{x}_\mathrm{D} <\hat{x}_\mathrm{S}$. It is therefore essential to take the chattering of the state into account.

\begin{theorem}\label{theorem:magnitude_x}
	For system~\eqref{eq:system} subject to actuator dynamics~\eqref{eq:actuator_SOL} with $\tau$ unknown,
    a dynamic sliding manifold~\eqref{eq:DSM} can be chosen such that the amplitude of the chattering of the state $x(t)$ is smaller than for the static sliding manifold~\eqref{eq:SM_static}.
\end{theorem}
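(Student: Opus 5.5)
We will argue exactly as in the proof of Theorem~\ref{lemma:mag_sigma}: keep the scaling $\tilde s = s\tau$, $\tilde f = f\tau$, $\tilde g = g\tau$, the choice $h=-1$, $l=1$, and the limit $\tilde g\to-\tilde f$ with $\tilde f<0$. Within the HB approximation, every signal of the linear part $\Sigma_\mathrm{D}$ oscillates at the same frequency $\tilde\omega_{\mathrm{p,D}}$ determined in \eqref{eq:DSM_periodic_frequency}. Hence the amplitude of the fundamental of $x$ follows from the amplitude $\hat\sigma_\mathrm{D}$ through the linear map from $x$ to $\sigma_\mathrm{D}$.

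\textbf{Step 1: transfer function from $x$ to $\sigma_\mathrm{D}$.} From \eqref{eq:DSM}, $Z(s)=\frac{g}{s-f}X(s)$, so
\begin{align}
\Sigma_\mathrm{D}(s)=\frac{l s - l f + h g}{s-f}\,X(s).
\end{align}
With $h=-1$, $l=1$ and scaling this becomes
\begin{align}
\hat x_\mathrm{D}=\hat\sigma_\mathrm{D}\,\left|\frac{\imag\tilde\omega-\tilde f}{\imag\tilde\omega-\tilde f-\tilde g}\right|_{\tilde\omega=\tilde\omega_{\mathrm{p,D}}}.
\end{align}
In the limit $\tilde g\to-\tilde f$ the factor tends to $\sqrt{\tilde\omega_{\mathrm{p,D}}^2+\tilde f^2}/\tilde\omega_{\mathrm{p,D}}$. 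Inserting $\tilde\omega_{\mathrm{p,D}}^2=1-2\tilde f$ gives $\tilde\omega_{\mathrm{p,D}}^2+\tilde f^2=(1-\tilde f)^2$, so the factor equals $(1-\tilde f)/\sqrt{1-2\tilde f}$.

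\textbf{Step 2: compare with the static case.} For the SSM we have $\sigma_\mathrm{S}=x$, so $\hat x_\mathrm{S}=\hat\sigma_\mathrm{S}=2K\tau/\pi$ by \eqref{eq:SMC_Amp_sigma}. From \eqref{eq:DSM_magnitude_sigma} and the factorization $2\tilde f^2-5\tilde f+2=(1-2\tilde f)(2-\tilde f)$,
\begin{align}
\lim_{\tilde g\to-\tilde f}\hat x_\mathrm{D}=\frac{4K\tau(1-\tilde f)}{\pi(1-2\tilde f)^{3/2}(2-\tilde f)}.
\end{align}
With $a:=-\tilde f>0$, the inequality $\hat x_\mathrm{D}<\hat x_\mathrm{S}$ is therefore equivalent to
\begin{align}
2(1+a)<(1+2a)^{3/2}(2+a).
\end{align}
This holds for every $a>0$, because $(1+2a)^{3/2}\ge 1+2a$ and $(1+2a)(2+a)=2+5a+2a^2>2+2a$. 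So every $f<0$ works, independently of the unknown $\tau$, since only the products $f\tau$ and $g\tau$ enter.

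\textbf{Step 3 (main obstacle): passing from the limit to an admissible design.} The limit $\tilde g=-\tilde f$ is not admissible, because it violates \eqref{eq:DSM_stability_reduced}. We therefore need that $\tilde\omega_{\mathrm{p,D}}$, $\hat\sigma_\mathrm{D}$ and the magnitude factor above depend continuously on $\tilde g$ near $-\tilde f$. I would get this from the implicit function theorem applied to $\Imag\{G_{\mathrm{D},\sigma}(\imag\tilde\omega)\}=0$: at the limit, the root $\tilde\omega_{\mathrm{p,D}}=\sqrt{1-2\tilde f}$ is simple, since the numerator factor $\tilde\omega^2+2\tilde f-1$ has nonzero derivative there. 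Since the limit inequality is strict, it then persists for all $\tilde g$ with $|\tilde g|<|\tilde f|$ sufficiently close to $-\tilde f$, which completes the argument. As in the remarks, the conclusion is valid within the accuracy of the HB approximation.
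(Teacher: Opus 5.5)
Your proposal is correct and follows the paper's proof: the same scaling, the same choice $h=-1$, $l=1$, the same limit $\tilde g\to-\tilde f$, the same amplitude $\frac{4K\tau(1-\tilde f)}{\pi(1-2\tilde f)^{3/2}(2-\tilde f)}$, and essentially the same two elementary bounds to close the inequality. The differences are minor: you get $\hat x_\mathrm{D}$ by rescaling $\hat\sigma_\mathrm{D}$ with the magnitude of the manifold filter $(s-f)/(s-f-g)$ rather than from $\hat u_\mathrm{HB}=4/\pi$ and $|G_{\mathrm{D},x}|$ (equivalent, since $G_{\mathrm{D},x}=G_{\mathrm{D},\sigma}\,(s-f)/(s-f-g)$), and your Step~3 spells out, via the implicit function theorem, the continuity argument for $\tilde g$ near $-\tilde f$ that the paper only asserts.
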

\begin{proof}
The amplitude of $x$ with the SSM
\begin{align}
   \hat{x}_{\mathrm{S}} = \hat{\sigma}_{\mathrm{S}} = \frac{2 K\tau}{\pi}\label{eq:SMC_magnitude_x}
\end{align}
is the same as~\eqref{eq:SMC_Amp_sigma}. Due to the extended dynamics of~\eqref{eq:SM_dynamic}, the amplitude of $x$ is different from that of $\sigma_{\mathrm{d}}$ in the case with a DSM.
Since the partially closed loop~\eqref{eq:DSM_SOL_partially_closed_loop} is linear, 
the fundamental oscillation appears in all signals with the same frequency $\omega_\mathrm{p}$. The amplitude and phase of the oscillation can be calculated by means of frequency domain analysis.
 To calculate the amplitude of $u_{\mathrm{HB}}$ we consider
\begin{align}
    \hat{u}_{\mathrm{HB}} &= \frac{\hat{\sigma}_{\mathrm{D}}}{| G_{\mathrm{D},\sigma}(\imag \omega_{\mathrm{p}}) |}
    =\frac{\hat{\sigma}_{\mathrm{D}}}{|\Real \lbrace G_{\mathrm{D},\sigma}(\imag \omega_{\mathrm{p}}) \rbrace|} = \frac{4}{\pi}.\label{eq:magnitude_uHB}
\end{align}
We only need the absolute value of the real part because the imaginary part here is zero at $\omega_{\mathrm{p}}$. 
Calculating the transfer function from auxiliary input $u_{\mathrm{HB}}(t)$ to the state $x(t)$ in system~\eqref{eq:DSM_SOL_partially_closed_loop} yields
\begin{align}
        G_{\mathrm{D},x}(s)= \frac{K \left(f -s\right)}{s \left(l \tau^2  s^3 +a_2  s^2  +a_1 s +gh - fl \right)}\label{eq:DSM_SOL_transferfunction_x}
\end{align}
with $a_2 = (2l\tau-fl\tau^2)$ and $a_1 = (l-2fl\tau)$. We choose the same scaling of $\tilde{s}$, $\tilde{f}$ and $\tilde{g}$, as well  as the same selection $h = -l = -1$ as in the proof of \autoref{lemma:mag_sigma}. Considering the limit for $\tilde{g} \rightarrow -\tilde{f}$ again yields
\begin{align}
	\lim\limits_{\tilde{g} \rightarrow -\tilde{f}} G_{\mathrm{D,x}}(\tilde{s}) &=  \frac{K \tau\left(\tilde{f}-\tilde{s}\right)}{\tilde{s}^2 {\left(\tilde{s}^2 + (2-\tilde{f}) \tilde{s}-2 \tilde{f} +1\right)}} \label{eq:DSM_SOL_transfer_function_x}
\end{align}
as limit of the new transfer function. Then,  with the amplification of the transfer function~\eqref{eq:DSM_SOL_transfer_function_x} we calculate the amplitude
\begin{align}
    \lim\limits_{\tilde{g} \rightarrow -\tilde{f}} \hat{x}_{\mathrm{D}} &= |G_{\mathrm{D},x}(\imag \omega_{\mathrm{p,D}})|\,\hat{u}_{\mathrm{HB}}\\
    &= \left|\frac{4 K \tau  \left(\tilde{f}-1\right)}{\pi \left(1-2 \tilde{f}\right)^{1/2} \left(2\tilde{f}^2-5\tilde{f}+2\right)}\right|.\label{eq:DSM_magnitude_x}
\end{align}

Eventually, we have to show that
\begin{gather}
\lim\limits_{\tilde{g} \rightarrow -\tilde{f}} \hat{x}_{\mathrm{D}} < \hat{x}_{\mathrm{S}}\\ 
\Leftrightarrow\quad\left|\frac{2  \left(\tilde{f}-1\right)}{ \left(1-2 \tilde{f}\right)^{1/2} \left(2\tilde{f}^2-5\tilde{f}+2\right)}\right| < 1.
\end{gather}
Since $\tilde{f} <0$, $\left(1-2\tilde{f}\right)^{\frac{1}{2}}>1$ also 
\begin{align}
\left(2\tilde{f}^2-5\tilde{f}+2\right) &> 2 (1-\tilde{f})\label{eqn:lars_inequality}
\end{align}
holds. Hence, $\hat{x}_{\mathrm{D}} < \hat{x}_{\mathrm{S}}$ holds for $f<0$ with $h = -1$, $l=1$ and $g \rightarrow -f$.
\end{proof}

Since the selection of the parameters in \autoref{lemma:mag_sigma} and \autoref{theorem:magnitude_x} is the same, we have shown that it is possible to select $f$, $g$, $h$ and $l$ such that the amplitude of the chattering of both $\hat{\sigma}$ and $\hat{x}$ for system \eqref{eq:system} with the DSM is lower than for the SSM. 

\section{Example}\label{sec:example}
We demonstrate the proposed method on two examples with 
$\tau \in \lbrace 0.01\, , 0.1 \rbrace$ and $K=1$. As in the proof of \autoref{lemma:mag_sigma} we choose $h= -1$ and $l = 1$. Since with limit $g \to -f$ the asymptotic stability of the reduced order dynamics is no longer valid in the sliding phase we set $f = -40$ and $g = 0.98 f = 39.6$. Like in integral SMC~\cite{utkinIntegralSlidingMode1996a} we initialize the state $z$ on the sliding manifold $\sigma_\mathrm{D}(t_0) = 0$, i.e. $z_0 = x_0$. Hence, the reaching phase is eliminated.

\begin{figure}
    \centering
    \includegraphics[width=\linewidth]{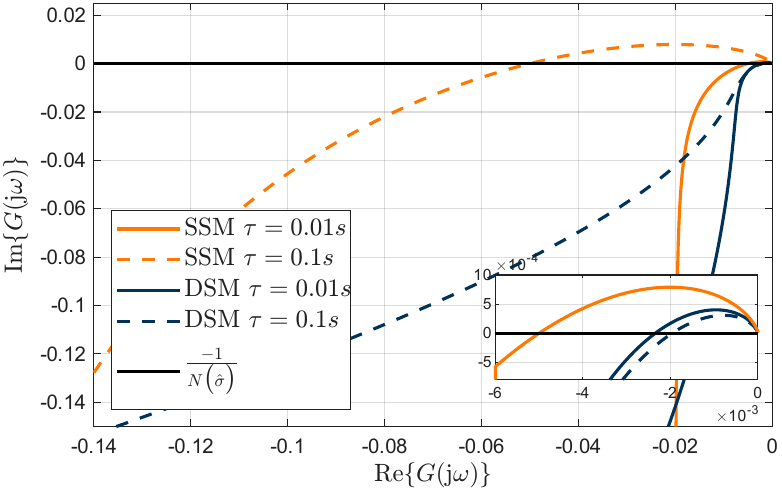}
    \caption{Harmonic Balance analysis for SMC and DSM each with $\tau = \SI{0.01}{s}$ and $\tau = \SI{0.1}{s}$}
    \label{fig:harmonic_balance}
\end{figure}
The HB analysis, \autoref{fig:harmonic_balance}, shows that the expected chattering amplitude of $\sigma$ is significantly lower with the DSM than for the SSM. The amplitude increases with $\tau$ for the SSM. 
For ideal switching, i.e.~with the $\sign$-function, the amplitude of the DSM is lower for larger values of $\tau$. Further, there is a margin between the real axis and the intersection between the local curves of the SSM and DSM. Hence, the approach may also be useful for chattering reduction in systems with nonideal switching, e.g.~with a relay-function, whose negative inverse of the describing function has a negative imaginary part.

\begin{figure}
    \centering
    \includegraphics[width=\linewidth]{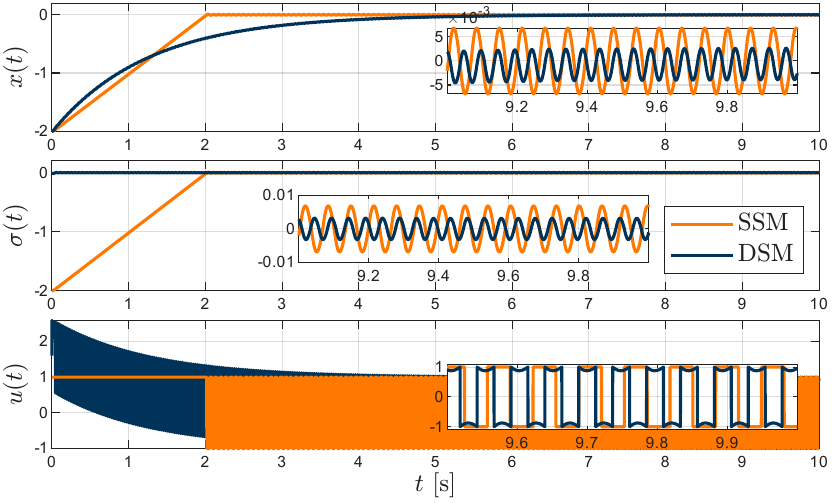}
    \caption{Simulation with $\tau = \SI{0.01}{s}$}
    \label{fig:Simulation_tau_001}
\end{figure}

\begin{figure}
    \centering
    \includegraphics[width=\linewidth]{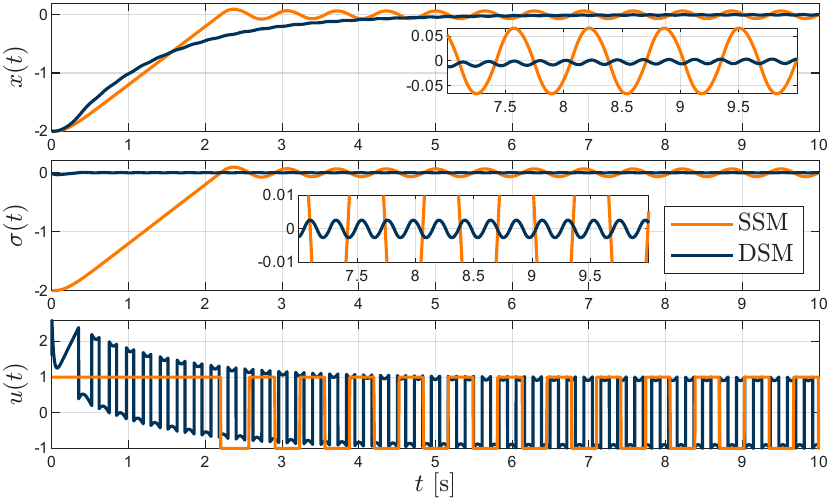}
    \caption{Simulation with $\tau = \SI{0.1}{s}$}
    \label{fig:Simulation_tau_010}
\end{figure}

In \autoref{fig:Simulation_tau_001} and \autoref{fig:Simulation_tau_010} the simulation with different $\tau$ is shown. The amplitude and frequency of the chattering depend on the actuator.  With the selected parameters of the DSM, chattering both with respect to $\sigma$ and $x$ is significantly lower with the DSM than for the SSM. However, the frequency in each scenario is higher with the DSM, which may cause problems in mechanical or electrical systems. If a possible interval for $\tau$ is known, $f$ can be chosen considering the maximal acceptable frequency~\eqref{eq:DSM_periodic_frequency} of the interval for $\tau$. 

\begin{table}
    \centering
    \caption{Comparison of the chattering amplitudes in the analysis with HB and the simulation}
    \begin{tabular}{|c|c|c|c|c|}
    \hline
        \textbf{Scenario} & \textbf{$\hat{\sigma}$ HB} & \textbf{$\hat{\sigma}$ Sim.}  & \textbf{$\hat{x}$ HB} & \textbf{$\hat{x}$ Sim.}\\
        \hline
        \hline
        SSM, $\tau = \SI{0.01}{s}$ & $0.0064$ & $0.0068$ & $0.0064$ & $0.0068$\\
        \hline
        SSM, $\tau = \SI{0.1}{s}$ & $0.0637$ & $0.0660$ & $0.0637$ & $0.0660$\\
        \hline
        DSM, $\tau = \SI{0.01}{s}$ & $0.0029$ & $0.0029$ & $0.0031$ & $0.0032$\\
        \hline
        DSM, $\tau = \SI{0.1}{s}$ & $0.0024$ & $0.0025$ & $0.0039$ & $0.0035$\\
        \hline
    \end{tabular}
    \label{tab:compare_magnitudes}
\end{table}

Table \ref{tab:compare_magnitudes} compares the amplitudes of the simulation and calculation with $\hat \sigma $ from~\eqref{eq:SMC_Amp_sigma} and ~\eqref{eq:DSM_magnitude_sigma} and $\hat x$ from~\eqref{eq:SMC_magnitude_x} and~\eqref{eq:DSM_magnitude_x}, respectively. In this example, the calculated values for the SSM are slightly lower than those from the simulation.  In this particular instance, the precision of the calculation with the HB is sufficiently high to obtain nearly identical results when comparing the scenario with the SSM and the DSM, both with the HB and the simulation. The magnitude of the HB analysis is consistent with that of the simulation.  

\section{Conclusion}\label{sec:conclusion}
We analytically proved that it is always possible, for the considered integrator with second order actuator dynamics, to select DSM parameters such that the chattering amplitude of both the sliding variable and the state is reduced compared to the SSM. The example also illustrates that the reduction of chattering is indeed significant.
Our results constitute a proof of concept and show that the method is a promising technique for chattering reduction. In order to prepare the approach for practical applications, future research will address more general systems and also take the control effort into account.









\bibliography{2ChatFreeSMC.bib}

\end{document}